\newtheorem{ex}{Example}
\begin{document}

\title{On the Application of Multiuser Detection in Multibeam Satellite Systems}

\author{Giulio Colavolpe, Andrea Modenini,  Amina Piemontese, and Alessandro Ugolini\\
\normalsize  Universit\`a di Parma, Dipartimento di Ingegneria dell'Informazione, Parco Area delle Scienze 181/A, Parma - ITALY\\
}
\maketitle

\begin{abstract}
We study the achievable rates by a single user in multibeam satellite scenarios. We show alternatives to the conventional symbol-by-symbol detection applied at user terminals. Single user detection is known to suffer from strong degradation when the terminal is located near the edge of the coverage area, and when aggressive frequency reuse is adopted.
For this reason, we consider multiuser detection, and take into account the strongest interfering signal. 
Moreover, we analyze a different transmission strategy, where the signals from two adjacent beams jointly serve two users in a time division multiplexing fashion. We describe an information-theoretic framework to compare different transmission/detection strategies by computing the information rate of the user in the reference beam.
\end{abstract}

\section{Introduction}\label{s:intro} 
The recent years have witnessed the explosion of satellite services and applications, and the related growing demand for high data rates. 
Next-generation satellite systems need new technologies to improve their spectral efficiency, in order to sustain the information revolution of modern societies.
The grand challenge is to satisfy this demand by living with the scarcity of the frequency spectrum. 
Resource sharing is probably the only option, and can be implemented by adopting a multibeam system architecture which allows to reuse the available bandwidth in many beams. The interference caused by resource sharing is typically considered undesirable, but a way to dramatically improve the spectral efficiency is to exploit this interference, by using interference management techniques at the receiver. 

In this paper, we consider the forward link of a multibeam satellite system, where an aggressive frequency reuse is applied. Under these conditions,
the conventional single user detector (SUD) suffers from a severe performance degradation when the terminal is located near the edge of the coverage
area, due to the high co-channel interference. On the other hand, the application of a decentralized multiuser detector (MUD) at the terminal which is able to cope with the interference can guarantee the required performance.

The literature on multiuser detection is wide, and in the area of satellite communications essentially focuses on the adjacent channel interference mitigation for the return link~\cite{PiGrCo13,BeElKa02,CoFePi11}, and includes centralized techniques to be applied at the gateway. Less effort has been devoted to the forward link. 

Recently, we investigated in~\cite{AnAnCaCo14} the benefits that can be achieved, in terms of spectral efficiency, when high frequency reuse is applied in a DVB-S2~\cite{Et03} system, and multiuser detection is adopted at the terminal to manage the presence of strong co-channel interference. The superiority of the MUD has been demonstrated through error rate simulations. In~\cite{CoAnPe14}, the authors study the applicability of a low complexity MUD based on soft interference cancellation. The advantage of the proposed detector is shown in terms of frame error rate. 

In this paper, we generalize the analysis of~\cite{AnAnCaCo14} by supplying an information-theoretic framework which allows us to evaluate the performance in terms of information rate (IR), without the need of lengthy 
error rate simulations and hence strongly simplifying the comparison of various scenarios. 
Furthermore, we consider also a different transmission strategy, where the two signals intended for the two beams cooperate to serve the two users (one in the first beam and the other in the second one) in a time division multiplexing fashion. In other words, instead of serving simultaneously the two users in the adjacent beams, the users are served  consecutively in an exclusive fashion. 

The conclusive picture is complex, since our results show that a transmission/detection strategy which is universally superior to the others does not exist, but the performance depends on several factors, such as the signal-to-noise ratio (SNR), the users' power profile, and the rate of the strongest interferer.
This fact outlines the importance of the proposed analysis framework, which can avoid to resort to computationally intensive simulations.

In the following, Section~\ref{s:sys_model} presents the system model and describes the two considered scenarios and related detection strategies. 
The information-theoretic analysis is treated in Sections~\ref{s:scenario1} and~\ref{s:scenario2}, and gives us the necessary means for the computation 
of the information rate for the reference beam. Section~\ref{s:num_res} presents the results of our study, whereas conclusions are drawn in Section~\ref{s:conclusions}.

\section{System Model}\label{s:sys_model}
We focus on the forward link of a satellite communication system. Figure~\ref{fig:system} depicts a schematic view of the baseband model we are considering. 
Signals $s_i(t)$, $i=1,\dots,K$, are $K$ signals transmitted by a multibeam satellite in the same frequency band. The satellite is thus composed of $K$ 
transmitters (i.e. transponders) and serves $K$ users on the ground. 
The nonlinear effects related to the high power amplifiers which compose the transponders are neglected since a multibeam satellite generally works in a multiple carriers per transponder modality, and hence the operational point of its amplifiers is far from saturation.  
We consider the case where the users experience a high level of co-channel interference, since we assume that they are located close to the edge of the coverage area of a beam and that an aggressive frequency reuse is applied.

The signal received by a generic user can be expressed as
 \begin{equation}
 r(t)=\sum_{i=1}^K\gamma_i s_i(t)+w(t)\,, \label{eq:model}
 \end{equation}
where $\gamma_i$ are proper complex gains, assumed known at the receivers, and $w(t)$ is the thermal noise. Without loss of generality, 
we assume that ``User 1'' is the reference user and that $\gamma_i\ge\gamma_{i+1}$. We will evaluate the ultimate performance limit of the reference user 
when the other users adopt fixed rates. We will consider the following two scenarios, that imply different transmission and detection strategies.
\newline
\textbf{Scenario 1.} Signal $ s_i(t) $ is intended for user $i$, and we are interested in the evaluation of the performance for ``User~1'', whose information 
  is carried by the signal with $i$=1. For this scenario, we evaluate the IR, or equivalently the achievable spectral efficiency, 
when ``User 1'' employs different detectors. In particular, we consider the case when  ``User 1'' employs:
\begin{itemize}
\item A SUD. In this case, all interfering signals $s_i(t)$, $i=2,\dots,K$ are considered as if they were additional thermal noise.   
\item A MUD for the useful signal and one interferer. In this case,  the receiver is designed to detect the useful signal and 
the most powerful interfering signal (that with $i$=2 in our model) whereas all the remaining signals are considered as if they were additional thermal noise. 
Data related to the interfering user are discarded after detection. This case will be called MUD$\times$2 in the following. 
\end{itemize}
Our analysis can be easily extended to the case of a MUD designed for more than two users. On the other hand, given the actual users' power profile, it has been shown 
in~\cite{AnAnCaCo14} that the MUD$\times$2 offers the best tradeoff between complexity and performance.
\begin{figure}
	\begin{center}
		\includegraphics[width=1.0\columnwidth]{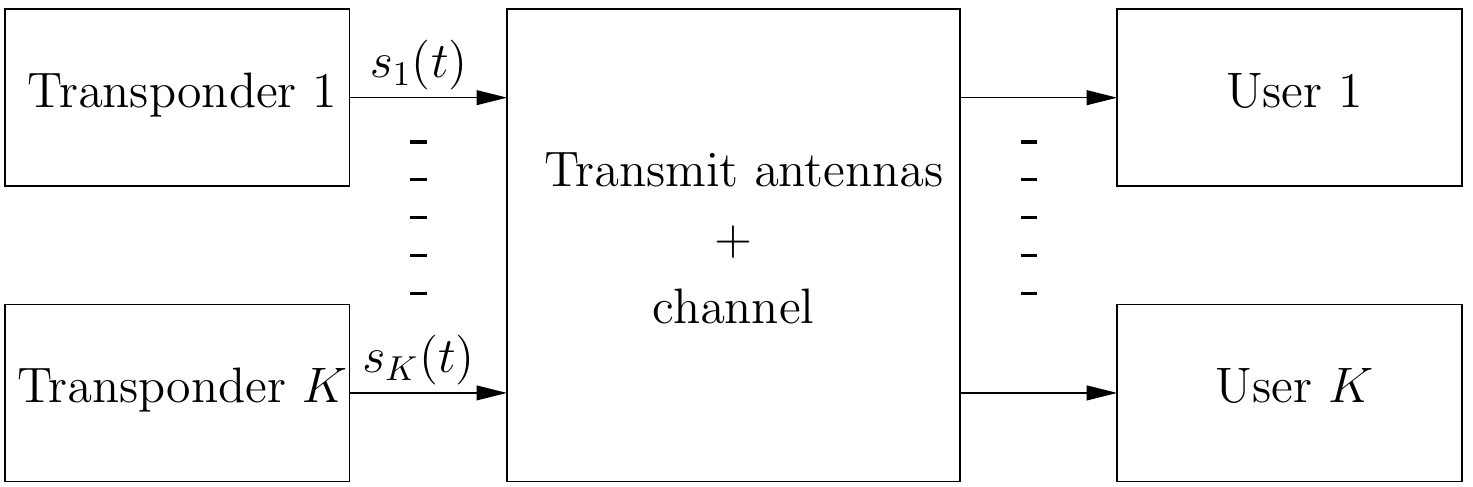}
		\caption{Block diagram of the considered system.}\label{fig:system}
	\end{center}
\end{figure}
\newline
\textbf{Scenario 2.} A different strategy is adopted in this case. Without loss of generality, we will consider detection of signals $s_1(t)$ and $s_2(t)$ and users 1 and 2 only. As in scenario~1, the remaining signals are considered as additional thermal noise.
Instead of simultaneously transmitting  signal $s_1(t)$ to ``User~1'' and signal $s_2(t)$ to ``User~2'', as in the previous scenario, we here serve ``User 1'' first by employing both signals $s_1(t)$ and $s_2(t)$ for a fraction $\alpha$ ($0\le\alpha\le 1$) of the total time, and then ``User~2'' by employing both signals $s_1(t)$ and $s_2(t)$ for the remaining fraction $1-\alpha$ of the total time. The fraction $\alpha$ can be chosen in order to maximize the sum-rate or simply by taking into account the different data rate needs of the users.   

Signals $s_1(t)$ and $s_2(t)$ are independent (although carrying information for the same user) and one of them is properly phase-shifted with respect to the other one in order to maximize the IR\footnote{We are assuming that all transmitted signals are modulated by using the same frequency.}. 
The value of this phase shift can be found by computing the IR for a fixed value of the phase shift and then looking for the value providing the maximum value of the IR. A proper discretization of the phase must be used.
The receiver must jointly detect both signals and its complexity is comparable to that of the MUD$\times$2 described for the first scenario.


\section{Information-theoretic Analysis for Scenario~1}\label{s:scenario1}
We first consider multiuser detection and describe how to compute the IR related to ``User 1'' assuming the MUD$\times$2 receiver. 
The same technique can be used to compute the IR related to ``User 2'' and straightforwardly extends to the case of MUD for more than two users.
The channel model assumed by the receiver is
	\begin{equation}
		y = x_1+ \gamma x_2 + w \, ,\label{eq:mac}
	\end{equation}
where $x_i$ is the $M^{(i)}$-ary complex-valued symbol sent over the $i$th beam and $w$ collects the thermal noise, with power $N$, and the remaining interferers 
that the receiver is not able to cope with. Symbols $x_1$ and $x_2$ are mutually independent and distributed according to their probability mass function $P(x_i)$. 
They are also properly normalized such that $\mathrm{E}\{|x_i|^2\}=P$, where $P$ is the transmitted power per user. Parameter $\gamma$ is complex-valued and models 
the power unbalance and the phase shift between the two signals.  Random variable $w$ is assumed complex and Gaussian. We point out that this is an approximation 
exploited only by the receiver, while in the actual channel the interference is clearly generated as in~\eqref{eq:model}. The MUD$\times$2 detector has a 
computational complexity which is proportional to the product $M^{(1)}  M^{(2)}$~\cite{Ve98}.

We are interested here in the computation of the maximum achievable rate $R_1$ for ``User 1'' when ``User~2'' adopts a fixed rate $R_2$, and the MUD$\times$2 is employed. Rates are defined as $R_i=r^{(i)}\log_2(M^{(i)})$, where $r^{(i)}$ is the rate of the adopted binary code.
The rates of the other $K-2$ interferers do not condition our results since at the receiver they are treated just as noise. This problem is quite different with respect to the case of the Multiple Access Channel (MAC) discussed in \cite{CoTh06} where both rates $(R_1,R_2)$ are jointly selected, while here the rate $R_2$ is fixed and data of ``User~2'' can be discarded after detection.

The IR for ``User~1'' in the considered scenario is given by Theorem~\ref{t:ThIR}, whose proof is based on the following two lemmas.
	\newtheorem{Lemma}{Lemma}
	\begin{Lemma}
		For a fixed rate $R_2$, the rate
		\begin{equation}\nonumber
			I_{A}= \begin{cases}
				I(x_1;y|x_2) & \mathrm{if\quad} R_2 < I(x_2;y) \\
				 I(x_1,x_2;y)-R_2 & \mathrm{if\quad} I(x_2;y) \leq R_2<I(x_2;y|x_1) \\
			         0                                          & \mathrm{if\quad} R_2\geq I(x_2;y|x_1)
			       \end{cases}
		\end{equation}
is achievable by ``User 1'' and is not a continuous function of $P/N$. Namely, a cut-off $\mathrm{SNR}_c$  exists such that $I_A=0$ for $P/N \leq \mathrm{SNR}_c$ and
		$I_A>0$ for $P/N > \mathrm{SNR}_c$ with a discontinuity.
	\end{Lemma}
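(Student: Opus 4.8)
The plan is to recognize that $I_A$ is nothing but the largest abscissa of the intersection of the two-user MAC capacity region with the horizontal line $R_2=\mathrm{const}$, and to read off the three cases from the geometry of that region. First I would recall from \cite{CoTh06} that, for the channel in~\eqref{eq:mac} with independent inputs of fixed distributions $P(x_1)$, $P(x_2)$, the set of reliably decodable pairs $(R_1,R_2)$ is the pentagon
\begin{equation}\nonumber
R_1 \le I(x_1;y|x_2), \quad R_2 \le I(x_2;y|x_1), \quad R_1+R_2 \le I(x_1,x_2;y)\,.
\end{equation}
Fixing $R_2$ and maximizing $R_1$ over this region immediately yields the three regimes: the binding constraint is the individual bound $R_1 \le I(x_1;y|x_2)$ as long as the sum bound is slack, which by the chain rule $I(x_1,x_2;y)=I(x_2;y)+I(x_1;y|x_2)$ happens exactly when $R_2 < I(x_2;y)$; for $R_2 \ge I(x_2;y)$ the sum bound becomes active and $R_1 = I(x_1,x_2;y)-R_2$; and once $R_2$ exceeds $I(x_2;y|x_1)$ the line lies entirely above the pentagon, so no pair with that $R_2$ is decodable.

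For the achievability and converse of the two nontrivial cases I would invoke the standard corner-point/time-sharing argument for the MAC: the corner $(I(x_1;y|x_2),I(x_2;y))$ is reached by successive decoding (decode $x_2$ treating $x_1$ as noise, then $x_1$ with $x_2$ removed), and every point on the dominant face $R_1+R_2=I(x_1,x_2;y)$ between the two corners is reached by time-sharing, while the converse is the usual Fano-inequality bound that forces any reliable pair into the pentagon. The one step that needs an operational rather than purely geometric justification is the third case $R_2 \ge I(x_2;y|x_1)$, where the claim is $I_A=0$. Here I would stress that the MUD$\times$2 is committed to jointly detecting both symbols, so its reliability hinges on first pinning down $x_2$; since $R_2 \ge I(x_2;y|x_1)$ means $x_2$ is not decodable even with a genie supplying $x_1$, the joint detector fails and no positive rate is guaranteed for ``User~1'', whence $I_A=0$.

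Finally, to establish the discontinuity in $P/N$ I would fix $R_2$ and track the four mutual informations as functions of the SNR. Each is continuous and monotonically increasing in $P/N$, and in particular $I(x_2;y|x_1)\to 0$ as $P/N\to 0$ and $I(x_2;y|x_1)\to\log_2 M^{(2)}$ as $P/N\to\infty$, hence there is a unique $\mathrm{SNR}_c$ solving $R_2 = I(x_2;y|x_1)$. For $P/N \le \mathrm{SNR}_c$ the system sits in the third case and $I_A=0$, while for $P/N$ just above $\mathrm{SNR}_c$ it sits in the second case with
\begin{equation}\nonumber
\lim_{P/N\to\mathrm{SNR}_c^+} I_A = \bigl[I(x_1,x_2;y)-I(x_2;y|x_1)\bigr]\bigr|_{\mathrm{SNR}_c} = I(x_1;y)\bigr|_{\mathrm{SNR}_c} > 0\,,
\end{equation}
again by the chain rule. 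Since $\mathrm{SNR}_c>0$ forces $I(x_1;y)|_{\mathrm{SNR}_c}>0$, the limit from the right is strictly positive while the value at and below $\mathrm{SNR}_c$ is zero, giving the claimed jump. I expect the genuine difficulty to be not the geometry but this last operational point: arguing cleanly that the forced joint detection of an undecodable interferer collapses ``User~1'' to zero rate rather than to the interference-as-noise value $I(x_1;y)$, since that is precisely what distinguishes the present problem from the classical MAC and is the source of the discontinuity.
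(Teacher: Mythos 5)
Your proposal is correct and follows essentially the same route as the paper: intersect the MAC pentagon with the horizontal line at height $R_2$, use the chain rule $I(x_1,x_2;y)=I(x_2;y)+I(x_1;y|x_2)$ to identify the switch between the two active constraints, and locate the discontinuity at the SNR where $I(x_2;y|x_1)=R_2$, with the right-hand limit equal to $I(x_1;y)|_{\mathrm{SNR}_c}>0$. Your added operational remarks (successive decoding, time sharing, and why the joint detector yields zero rate in the third regime) are elaborations the paper leaves implicit, but the core argument is the same.
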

	\begin{proof}
		In \cite{CoTh06}, it is shown that the achievable region for the MAC is given by the region of points ($R_1,R_2$) such that
		\begin{eqnarray}
			R_1 & <& I(x_1;y|x_2)=I_1 \label{eq:c1} \\
			R_2 & <& I(x_2;y|x_1)=I_2 \\
			R_1+R_2 & <& I(x_1,x_2;y)=I_{\rm J}\label{eq:c3}  \,.
		\end{eqnarray}
An example of such a region is shown in Figure~\ref{fig:region}.
\begin{figure}
	\begin{center}
		\includegraphics[width=0.85\columnwidth]{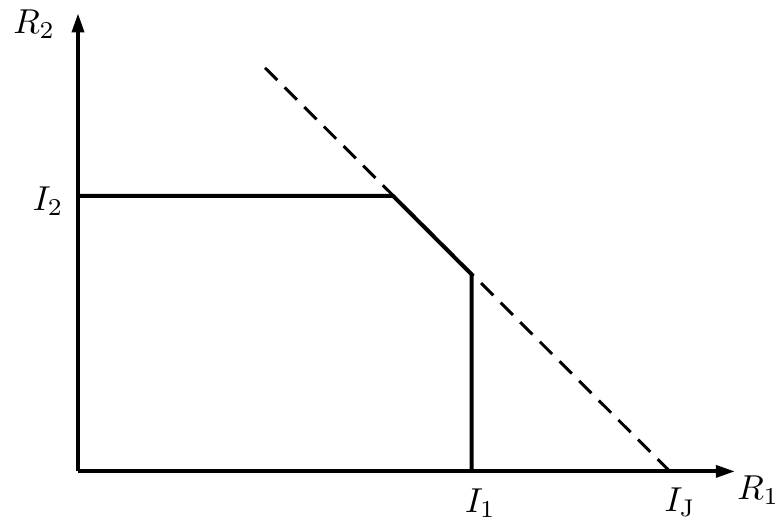}
		\caption{Example of MAC capacity region.}\label{fig:region}
	\end{center}
\end{figure}
If $R_2$ is constrained to a given value, we derive from (\ref{eq:c1}) and (\ref{eq:c3}) that 
$$R_1< \min\{ I(x_1;y|x_2),I(x_1,x_2;y)-R_2\}$$ 
when $R_2<I(x_2;y|x_1)$. The first term is lower when 
$$R_2 < I(x_1,x_2;y) -I(x_1;y|x_2) = I(x_2;y)\,.$$ 
Thus, $I_{A}$ is an achievable rate for ``User 1''.
		
We now prove that $I_A$ has a cut-off rate. Since, $I(x_2;y|x_1)$ is a non-decreasing function of $P/N$~\cite{GuShVe05}, there exists $\mathrm{SNR}_c$ such that $I(x_2;y|x_1)=R_2$, and hence $$I_A(\mathrm{SNR}_c)=0.$$ On the other hand for a small $\varepsilon>0$,  it holds $R_2=I(x_2;y|x_1)-\delta$ where $\delta>0$. It follows that 
$I(x_1;y|x_2)>I(x_1,x_2;y)-R_2$. Thus 
$$I_A(\mathrm{SNR}_c+\varepsilon)=I(x_1,x_2;y)-R_2> I(x_1;y)>0$$ 
for $\varepsilon\rightarrow0^+$.
\qedhere
\end{proof}
\textit{Discussion}: The proof of the lemma can be done graphically by considering the intersection of the achievable
region with a horizontal line at height $R_2$. 

When $R_2>I(x_2;y|x_1)$ clearly the rate of ``User 2'' cannot be achieved. However, we also have to account for this case and therefore we consider also the achievable rate $I(x_1;y)$, which is the relevant rate when ``User 2'' is just considered as interference. In this case, the receiver exploits the statistical knowledge of the signal $s_2(t)$ but does not attempt to recover the relevant information.
\begin{Lemma}
		The rate $I_S(P/N)=I(x_1;y)$ as a function of $P/N$ is always greater than 0 and satisfies
		\begin{eqnarray*}
			I_S(\mathrm{SNR}_c) & = & \lim_{\varepsilon \rightarrow0^+} I_A(\mathrm{SNR}_c+\varepsilon) \\
			I_S(\mathrm{SNR}_c+\delta) & < & I_A(\mathrm{SNR}_c+\delta)
		\end{eqnarray*}
		for any $\delta>0$.

	\end{Lemma}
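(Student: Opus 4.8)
The plan is to reduce everything to two chain-rule identities together with the monotonicity of $I_2=I(x_2;y|x_1)$ already invoked in Lemma~1. First I would write, using the chain rule for mutual information and the independence of $x_1$ and $x_2$,
$$I_{\rm J}=I(x_1;y)+I(x_2;y|x_1)=I_S+I_2,\qquad I_{\rm J}=I(x_2;y)+I(x_1;y|x_2),$$
so that $I_S=I(x_1;y)=I_{\rm J}-I_2$ and $I(x_2;y)=I_{\rm J}-I_1$. Positivity of $I_S$ is then immediate: since $x_1$ is a nondegenerate symbol that genuinely affects $y=x_1+\gamma x_2+w$ through a finite-variance channel, $I(x_1;y)>0$ for every $P/N$. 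The same manipulations give $I_2-I(x_2;y)=I(x_1;x_2|y)\ge 0$ and $I_1-I_S=I(x_1;x_2|y)\ge 0$, where the first shows $I(x_2;y)\le I_2$; this bound is what pins down the active branch of the piecewise function $I_A$.

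For the first equality I would locate the regime of $I_A$ just above the cut-off. By definition $R_2=I_2(\mathrm{SNR}_c)$, and since $I_2$ is nondecreasing in $P/N$ we have $R_2<I_2(\mathrm{SNR}_c+\varepsilon)$ for $\varepsilon>0$; moreover $I(x_2;y)\le I_2$ always, so by continuity of $I(x_2;y)$ in $P/N$ we also have $R_2\ge I(x_2;y)(\mathrm{SNR}_c+\varepsilon)$ for $\varepsilon$ small. Hence the middle branch applies, $I_A(\mathrm{SNR}_c+\varepsilon)=I_{\rm J}(\mathrm{SNR}_c+\varepsilon)-R_2$. Letting $\varepsilon\to 0^+$ and using continuity of $I_{\rm J}$ in $P/N$ yields $\lim_{\varepsilon\to0^+}I_A=I_{\rm J}(\mathrm{SNR}_c)-R_2=I_{\rm J}(\mathrm{SNR}_c)-I_2(\mathrm{SNR}_c)=I_S(\mathrm{SNR}_c)$, which is exactly the first claimed identity.

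For the strict inequality at $\mathrm{SNR}_c+\delta$ I would argue by cases on the active branch of $I_A$. Monotonicity of $I_2$ forces $R_2\le I_2(\mathrm{SNR}_c+\delta)$, so the vanishing branch is excluded for $\delta>0$ (strictly, once $I_2$ is strictly increasing). On the middle branch $I_A=I_{\rm J}-R_2>I_{\rm J}-I_2=I_S$ precisely because $R_2<I_2$ there; on the top branch $I_A=I_1=I_S+I(x_1;x_2|y)>I_S$ because observing $y=x_1+\gamma x_2+w$ couples the two inputs, so $I(x_1;x_2|y)>0$. In either case $I_S(\mathrm{SNR}_c+\delta)<I_A(\mathrm{SNR}_c+\delta)$.

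The delicate points are not the algebra but the bookkeeping on the branches: certifying that $I_A$ sits on the middle branch immediately above the cut-off (which needs $I(x_2;y)\le R_2<I_2$ to hold there), and guaranteeing that the zero branch never reappears for $\delta>0$, which relies on \emph{strict} monotonicity of $I(x_2;y|x_1)$ in $P/N$ rather than mere nondecrease. I would make this strictness explicit, either by appealing to the strict form of the I-MMSE relation of \cite{GuShVe05} for this Gaussian channel, or by restricting the claim to the range of $\delta$ on which $R_2<I_2(\mathrm{SNR}_c+\delta)$, so that the comparison $I_S<I_A$ is genuinely strict.
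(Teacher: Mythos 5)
Your proof is correct, and it follows the same basic route the paper gestures at, but it is substantially more complete than the paper's own argument. The paper disposes of this lemma in one line, ``observe that $I(x_1;y)\leq I(x_1;y|x_2)$ and $I(x_1;y)\leq I(x_1,x_2;y)$,'' which only yields the \emph{non-strict} comparison $I_S\leq\min\{I_1,I_{\rm J}\}$ and, by itself, establishes neither the equality at $\mathrm{SNR}_c$ nor the strict inequality for $\delta>0$. Your chain-rule identities $I_{\rm J}=I_S+I_2=I(x_2;y)+I_1$ are exactly the refinement needed: they quantify the gaps as $I_1-I_S=I_2-I(x_2;y)=I(x_1;x_2|y)$, which gives you (i) the exact limit $I_{\rm J}(\mathrm{SNR}_c)-R_2=I_S(\mathrm{SNR}_c)$ on the middle branch, and (ii) strictness on both the middle branch (since $R_2<I_2$ there) and the top branch (since $I(x_1;x_2|y)>0$ for this additive channel with $\gamma\neq 0$). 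You also correctly identify the one genuine loose end that the paper leaves implicit: excluding the zero branch of $I_A$ for all $\delta>0$ requires \emph{strict} monotonicity of $I(x_2;y|x_1)$ in $P/N$ (available via the I-MMSE relation for this Gaussian-noise channel), not mere nondecrease as invoked in Lemma~1. Your bookkeeping on which branch of $I_A$ is active just above the cut-off (needing $I(x_2;y)\leq R_2<I_2$ there, which follows from $I(x_2;y)<I_2$ at $\mathrm{SNR}_c$ plus continuity) is also a detail the paper skips but that the graphical ``Discussion'' after Lemma~1 silently relies on. In short: same skeleton, but your version actually proves the statement.
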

	\begin{proof}
		The proof is straightforward. It can be done by observing that $I(x_1;y)\leq I(x_1;y|x_2)$
		and that $I(x_1;y)\leq I(x_1,x_2;y)$.
	\end{proof}

	\newtheorem{Theorem}{Theorem}
	\begin{Theorem}\label{t:ThIR}
		The achievable information rate for a single user on the two users multiple access channel, for a fixed rate $R_2$,
		is given by
		\begin{equation}\label{eq:thm1}
			R_1 < \max\{I_S,I_A\}\, ,
		\end{equation}				
		and is a continuous function of $P/N$.
	\end{Theorem}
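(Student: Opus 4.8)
The plan is to prove the statement in two stages: first the achievability of the rate $\max\{I_S,I_A\}$, and then its continuity in $P/N$. Achievability is the straightforward part. By Lemma~1 the rate $I_A$ is achievable by ``User~1'' with the MUD$\times$2 that jointly decodes both signals, while by Lemma~2 the rate $I_S=I(x_1;y)$ is achievable by treating the signal of ``User~2'' purely as statistically known interference, as discussed just before that lemma. Since ``User~1'' is free to select, at each operating point $P/N$, whichever of the two detection strategies yields the larger rate, the rate $\max\{I_S,I_A\}$ is achievable, which is the assertion~\eqref{eq:thm1}.

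The real content is the continuity claim, and the two relations of Lemma~2 are engineered precisely to deliver it. First I would record the two regularity facts I need: $I_S(P/N)=I(x_1;y)$ is a continuous function of the SNR, which follows from the continuous dependence of mutual information on the channel parameters; and, by Lemma~1, $I_A$ is continuous everywhere except at the cut-off $\mathrm{SNR}_c$, where it jumps up from the value $0$ attained for $P/N\le\mathrm{SNR}_c$ to the strictly positive right limit $\lim_{\varepsilon\to0^+}I_A(\mathrm{SNR}_c+\varepsilon)$. The strategy is then to show that the maximization selects $I_S$ below the cut-off and $I_A$ above it, so that $\max\{I_S,I_A\}$ locally coincides with a continuous function on each side of $\mathrm{SNR}_c$.

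Concretely, for $P/N\le\mathrm{SNR}_c$ one has $I_A=0<I_S$, since $I_S>0$ by Lemma~2, hence $\max\{I_S,I_A\}=I_S$; and for $P/N>\mathrm{SNR}_c$ the inequality $I_S(\mathrm{SNR}_c+\delta)<I_A(\mathrm{SNR}_c+\delta)$ from Lemma~2 gives $\max\{I_S,I_A\}=I_A$. On each open half-line the max therefore equals a continuous function, so only the crossover point $\mathrm{SNR}_c$ needs inspection. There the left limit and the value of the max both equal $I_S(\mathrm{SNR}_c)$, while the right limit equals $\lim_{\varepsilon\to0^+}I_A(\mathrm{SNR}_c+\varepsilon)$; the first relation of Lemma~2 states exactly that these two quantities coincide, so the one-sided limits and the value agree and $\max\{I_S,I_A\}$ is continuous at $\mathrm{SNR}_c$ as well.

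The main obstacle is conceptual rather than computational: it is the discontinuity of $I_A$ at the cut-off, and the whole point is that $I_S$ ``fills in'' the jump exactly. Lemma~2 is tailored to this end — its first relation pins the right limit of $I_A$ to the value of $I_S$ at the cut-off, and its second relation guarantees that $I_A$ dominates $I_S$ immediately above the cut-off, so that the maximization hands off continuously from $I_S$ to $I_A$ at precisely $\mathrm{SNR}_c$. The care needed is in verifying that these two relations suffice and that no other crossing or jump can occur away from $\mathrm{SNR}_c$, which is where I would lean on Lemma~1's identification of $\mathrm{SNR}_c$ as the unique discontinuity.
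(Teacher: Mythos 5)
Your proof is correct and follows essentially the same route as the paper, which simply invokes the two Lemmas (plus a graphical argument); your write-up just makes explicit the hand-off argument at $\mathrm{SNR}_c$ that the paper leaves implicit. No gaps — the two relations in Lemma~2 are indeed exactly what is needed to patch the jump of $I_A$ at the cut-off.
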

	\begin{proof}
	 Proof made by means of the Lemmas. In fact, $I_A$ and $I_S$ are the maximum rates achievable by ``User 1'' when ``User 2'' can be perfectly decoded, or not. 
	 An alternative graphical proof can be derived from Figure~\ref{fig:region_thm}, which plots the rate achievable by ``User~1'' as a function of $R_2$, for a generic fixed value of $P/N$. We clearly see that inequality~\eqref{eq:thm1} holds.
	\end{proof}
\begin{figure}
	\begin{center}
		\includegraphics[width=0.95\columnwidth]{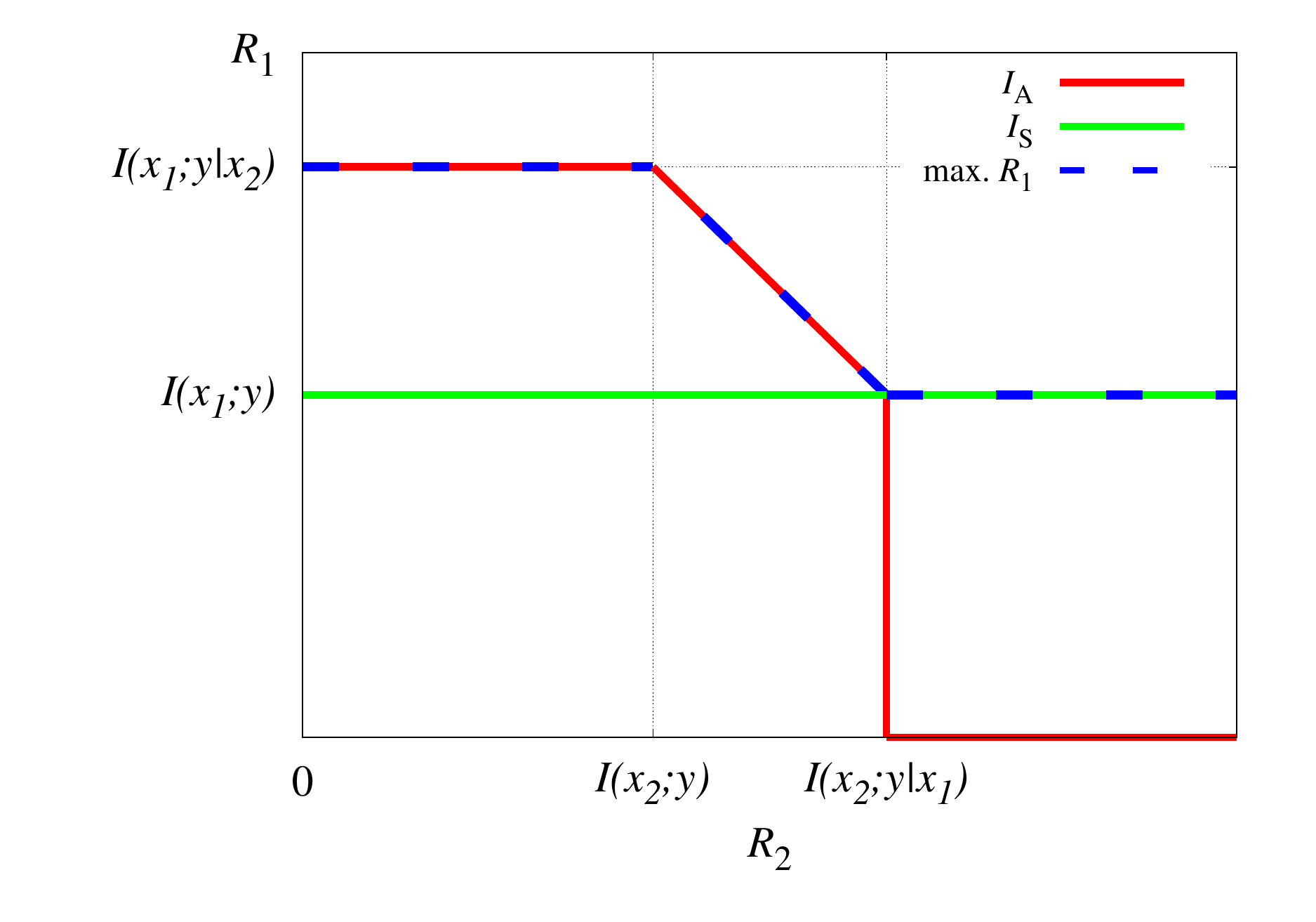}
		\caption{Graphical proof of Theorem~\ref{t:ThIR}.}\label{fig:region_thm}
	\end{center}
\end{figure}

\begin{ex} For Gaussian symbols and $K=2$, we obtain that 
\begin{equation}\nonumber
R_1 < \begin{cases}
		\mathcal{C}\left(\frac{P}{N} \right) &  \mathrm{if\quad}R_2<\mathcal{C}\left(\frac{P\gamma^2}{N+P} \right) \\
	 \mathcal{C}\left(\frac{P(1+\gamma^2)}{N} \right)\!-\!R_2 	& \mathrm{if\quad} \mathcal{C}\left(\frac{P\gamma^2}{N+P} \right) \leq R_2<\mathcal{C}\left(\frac{P\gamma^2}{N} \right)\\
	\mathcal{C}\left(\frac{P}{N+P\gamma^2} \right) & \mathrm{if\quad}R_2\geq \mathcal{C}\left(\frac{P\gamma^2}{N} \right)\,,
      \end{cases}
\end{equation}
where $\mathcal{C}(x)=\log_2(1+x)$.
All curves are shown in Figure \ref{fig:GaussCurve}, for the case of $|\gamma|=0.79$, $R_2=1/2$, and the overall bound is given by the red curve. We can see from the figure that this bound is clearly continuous.
\end{ex}

\begin{figure}
	\includegraphics[width=1.0\columnwidth]{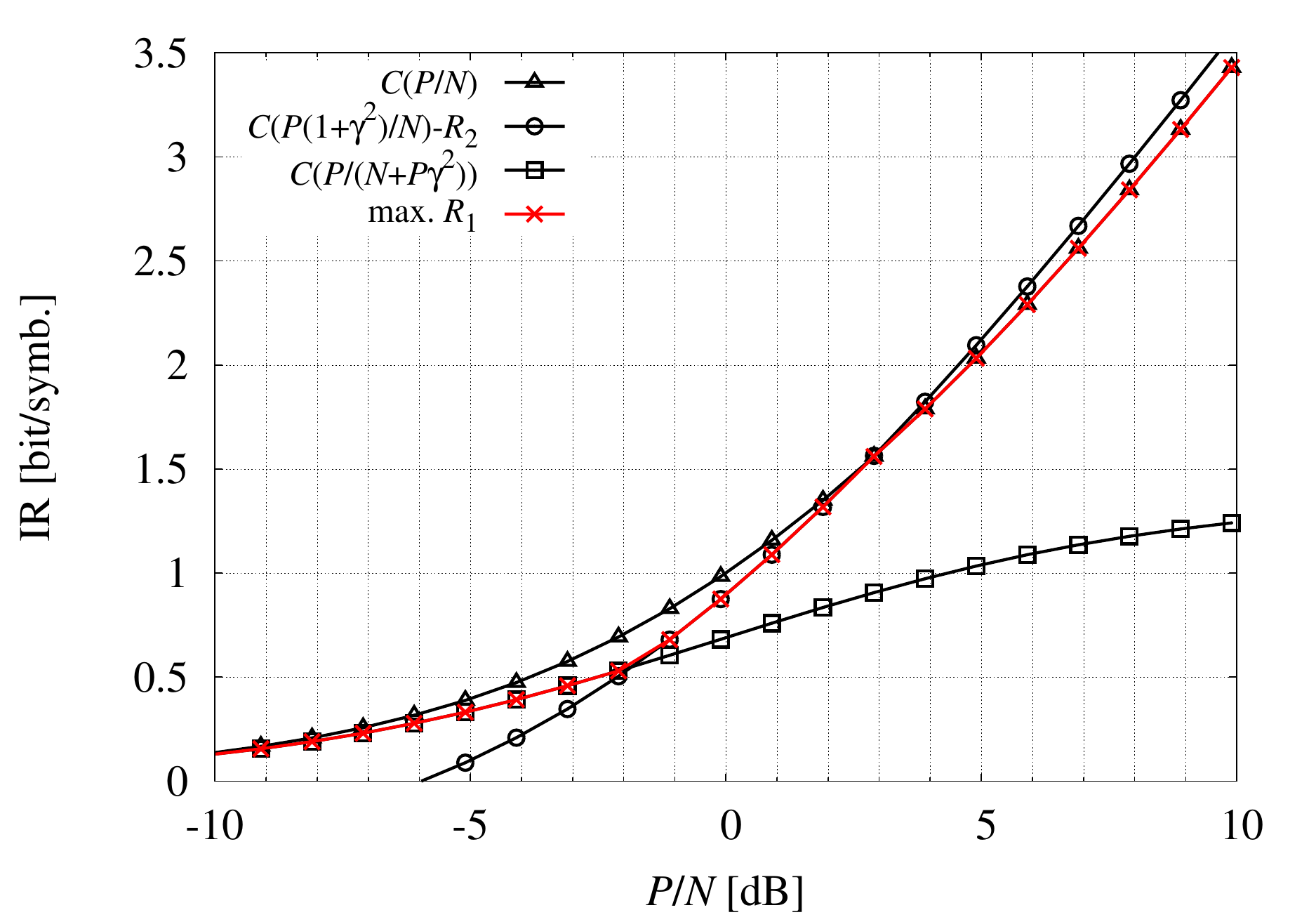}
	\caption{Maximum rate achievable by ``User 1'', for $K=2$, Gaussian symbols, and $R_2=1/2$.}\label{fig:GaussCurve}
\end{figure}

The computation of the IRs $I(x_1;y|x_2)$, $I(x_2;y|x_1)$, $I(x_1,x_2;y)$, $I(x_1;y)$ can be performed by using the achievable lower bound based on 
mismatched detection~\cite{MeKaLaSh94}.

When a SUD is employed at the terminal, the theoretic analysis can be based on the following discrete-time model 
	\begin{equation}\nonumber
		y = x_1+ w \, ,
	\end{equation}
where $w$ includes the thermal noise and the interferers that the receiver ignores.
As known, the complexity of the SUD is much lower than that of the multiuser receiver, and is proportional to $M^{(1)}$.
The computation of the IR $I(x_1;y)$ is again based on mismatched detection~\cite{MeKaLaSh94} and allows us to select the maximum rate for ``User~1'' 
when the co-channel interference is not accounted for.

\section{Information-theoretic Analysis for Scenario~2}\label{s:scenario2}
Let us consider the fraction $\alpha$ of time when both signals are used to send information to ``User 1''. 
Hence, during this time slot both signals $s_1(t)$ and $s_2(t)$ are intended to ``User~1''. 
Since $s_1(t)$ and $s_2(t)$ are independent, we are exactly in the case of the MAC. By properly selecting the rate of the two signals any point of the capacity region can be achieved~\cite{CoTh06}. Clearly, we are interested in selecting the two rates in such a way that the sum-rate $I(x_1,x_2;y)$ is maximized. 


\section{Numerical Results}\label{s:num_res}
In this section, we compare the two scenarios described in Section~\ref{s:sys_model} and the corresponding detection strategies by considering 
the performance of ``User~1'', evaluated in terms of IR. 

We assume as reference system the DVB-S2 standard~\cite{Et03} and hence consider adaptive coding and modulation. We choose a frequency reuse with factor two, to generate a high co-channel interference, and consider the users 
which are located close to the edge of the coverage area of the reference beam. In this case, it has been shown \cite{AnAnCaCo14} that it is sufficient to consider the five strongest interfering beams. Therefore, we simulate $K=6$ users, employing different modulation formats. 
Particularly, users with $i=1,\,2$ adopt a QPSK modulation, users with $i=3,\,4$ and 6 adopt a 8PSK modulation, and the user with $i=5$ adopts a 16APSK modulation. 

To identify the users' power profile, we define the signal-to-interference power ratio as
$$
\lambda_i=|\gamma_1|^2/|\gamma_i|^2\, ,
$$
and consider three realistic cases which have a different power profile, and are listed in Table~\ref{t:cases}. These distributions are typical of the forward link of a multibeam broadband satellite system with 2 colors frequency reuse.

\begin{table}  \caption{Power profiles for the considered simulations, corresponding to a two colors frequency reuse.}   \label{t:cases}
\begin{center}
\begin{tabular}{|c|c|c|c|c|c|}
  \hline 
  Case & $\lambda_2$ & $\lambda_3$ & $\lambda_4$ & $\lambda_5$ & $\lambda_6$  \\ 
  \hline \hline
   1 &  0 dB & 25 dB & 25 dB & 27 dB & 30 dB \\ 
  \hline 
   2  & 2 dB & 26 dB & 26 dB & 27 dB & 30 dB\\ 
  \hline
   3 & 4 dB & 27 dB & 26 dB & 27 dB & 30 dB \\ 
  \hline 
  \end{tabular}
\end{center}
\end{table}

Figures~\ref{fig:IR1}--\ref{fig:IR3} show the IR related to ``User 1'' as a function of $P/N$ for the three considered interference patterns. In the case of scenario 1, we evaluate both 
the IR achievable by a SUD and that achievable by the MUD$\times$2 algorithm. 
In case of MUD$\times$2, the performance is heavily affected by the rate of ``User~2'', and hence we have to analyze performance for a fixed rate of the binary code employed by the signal $s_2(t)$. We thus have three IR curves for the MUD in scenario~1, to consider the case where ``User~2'' adopts a low code rate (3/5), an average code rate (5/6) and a high rate (8/9), chosen among the ones foreseen by the standard.
In the case of scenario~2, it is assumed that $\alpha=0.5$ and this has been taken into account in the computation of the IR. We recall that for this scenario the relative phase shift of signals $s_1(t)$ and $s_2(t)$ has been optimized by simulation.

Our results show that we cannot identify the strategy which universally achieves the best performance. In particular, the figures show that ``User~1'' has the best IR in  scenario~2 for low-to-medium SNR values in the first case,  where the interference of the second signal is very strong, while in the other cases the advantage of scenario~2 is reduced.

As expected, in scenario~1 the adoption of the MUD gives the best results with respect to the SUD, and this is at the price of an increased complexity. Moreover, the performance of the MUD heavily depends on the rate of the strongest interfering user. In case~3, the SUD gives very good IRs and hence it is the best choice to compromise between complexity and performance for a large SNR range.

\begin{figure}
	\begin{center}
		\includegraphics[width=1.0\columnwidth]{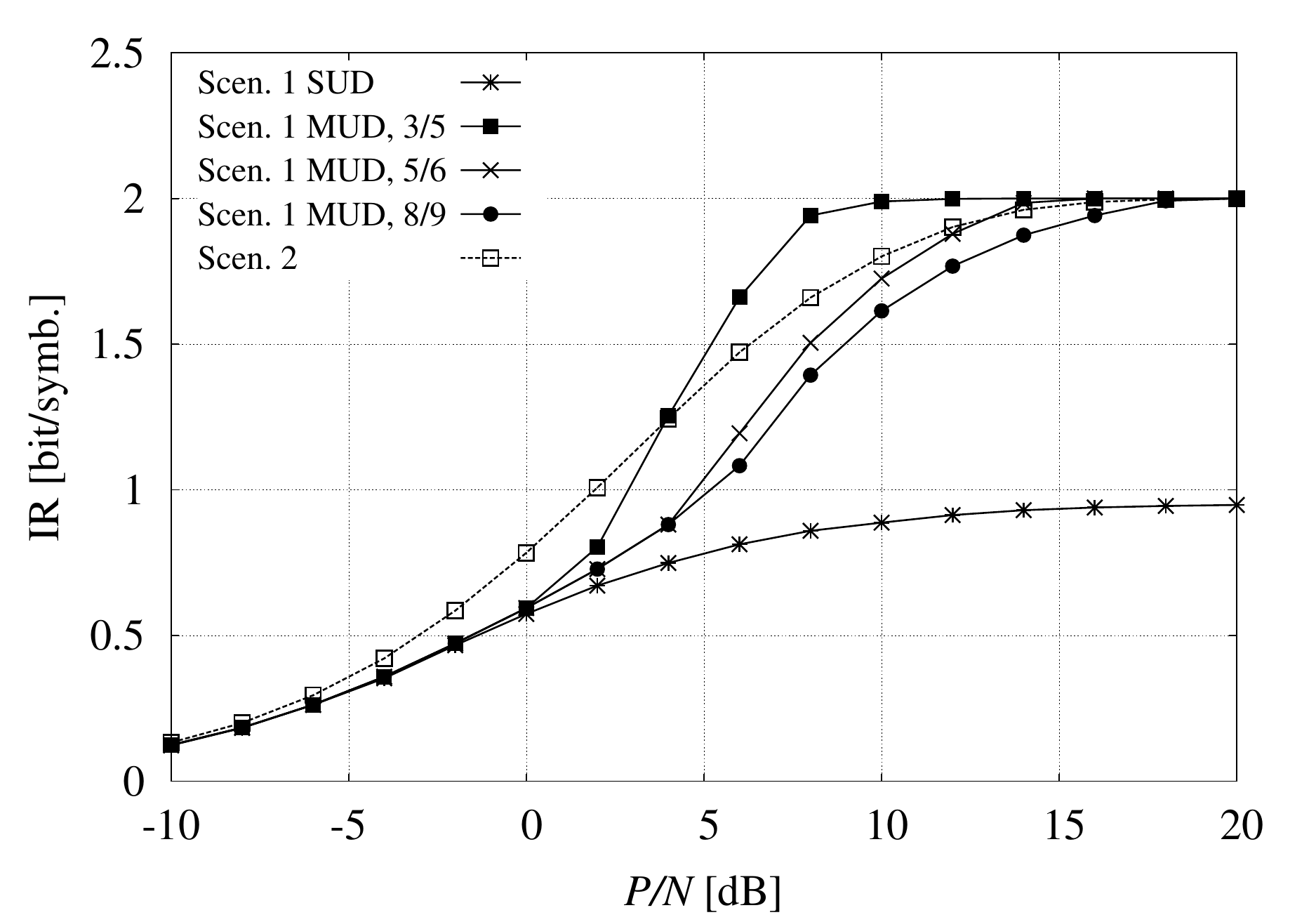}
		\caption{Information rate of ``User 1'' for the case 1 in the considered scenarios, using different receivers.}\label{fig:IR1}
	\end{center}
\end{figure}

\begin{figure}
	\begin{center}
		\includegraphics[width=1.0\columnwidth]{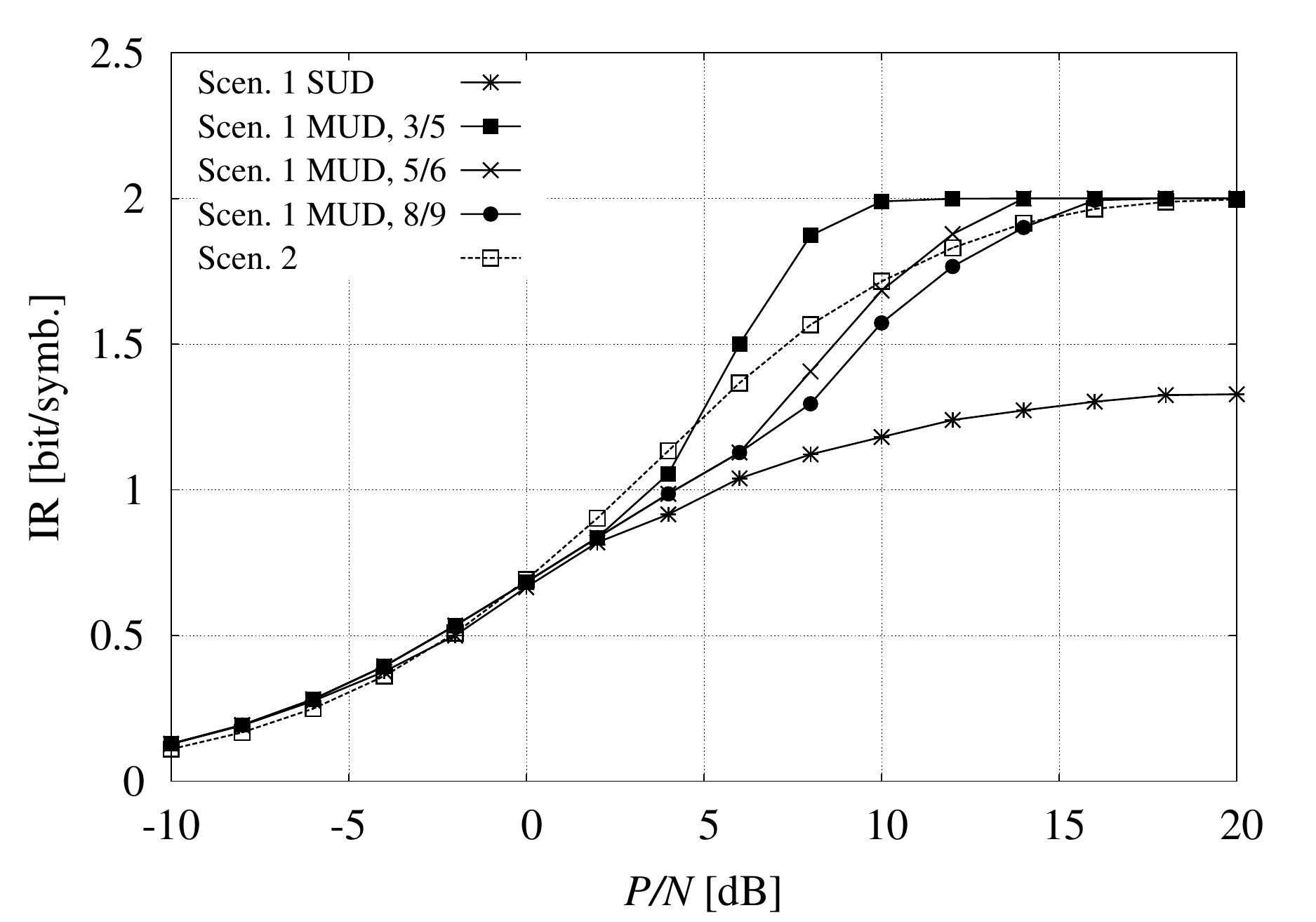}
		\caption{Information rate of ``User 1'' for the case 2 in the considered scenarios, using different receivers.}\label{fig:IR2}
	\end{center}
\end{figure}

\begin{figure}
	\begin{center}
		\includegraphics[width=1.0\columnwidth]{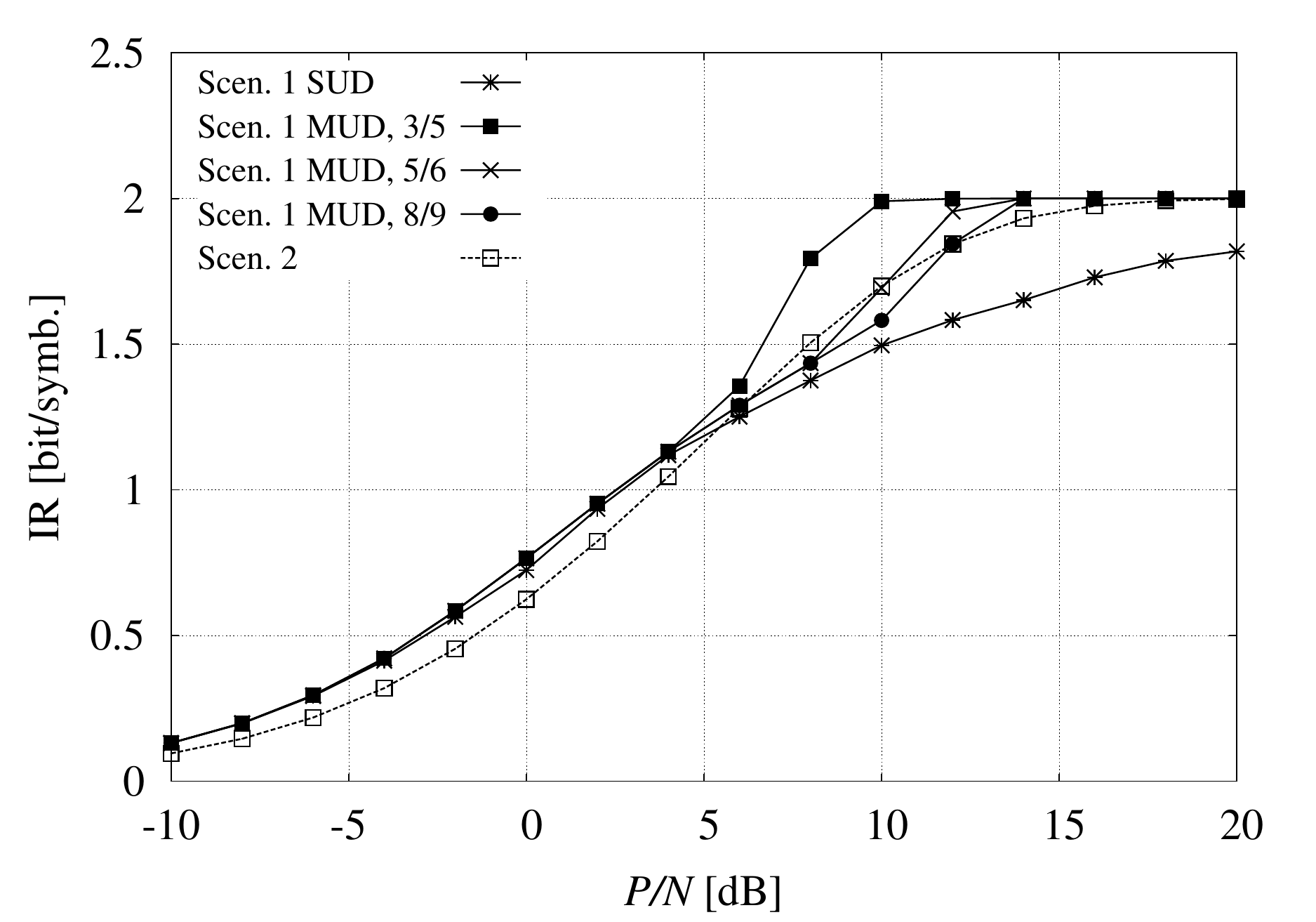}
		\caption{Information rate of ``User 1'' for the case 3 in the considered scenarios, using different receivers.}\label{fig:IR3}
	\end{center}
\end{figure}

\section{Conclusions}\label{s:conclusions}
In this paper we have addressed the problem of multiuser detection in the forward link of a multibeam satellite system, in the presence of strong co-channel interference. We considered alternative techniques to the single user detection, that take into account the strongest interfering signal. We have shown that this technique can considerably increase the achievable rate at the cost of a higher computational complexity. 

Furthermore, we considered a transmission strategy where the signals from two beams serve two users in a time division multiplexing way, and we show that this approach is effective at low signal-to-noise ratio, when the co-channel interference is very strong. However, our results reveals that there is no clear winner. In fact, the best strategy depends on the power profile of the interfering signals, the rates of the signals, and the signal-to-noise power ratio.
\section*{Acknowledgement}
This work is partially funded by the European Space Agency, ESA-ESTEC, Noordwijk, The Netherlands. 
The view expressed herein can in no way be taken to reflect the official opinion of the European Space Agency.

\bibliographystyle{ieeetr}

\end{document}